\newtheorem{thm}{Theorem}[section]
\newtheorem{lem}[thm]{Lemma}
\newtheorem{prop}[thm]{Proposition}
\newtheorem{claim}[thm]{Claim}
\newtheorem{assum}{Assumption}[section]
\newtheorem{cor}[thm]{Corollary}
\newtheorem{exmp}{Example}[section]
    \def\ddefloop#1{\ifx\ddefloop#1\else\ddef{#1}\expandafter\ddefloop\fi}
    \def\ddef#1{\expandafter\def\csname c#1\endcsname{\ensuremath{\mathcal{#1}}}}
    \def\ddef#1{\expandafter\def\csname s#1\endcsname{\ensuremath{\mathsf{#1}}}}
    \def\ddef#1{\expandafter\def\csname b#1\endcsname{\ensuremath{\mathbb{#1}}}}
\def\Ex{{\mathbf E}} 
\def\Pr{{\mathbf P}} 
\DeclareMathOperator*{\argmin}{arg\,min}
\title{\LARGE \bf
On The Role of Social Identity in the Market for (Mis)information
}
\author{Vijeth Hebbar and C\'edric Langbort
\thanks{This work was supported by the \textit{ARO} MURI grant W911NF-20-0252 (76582 NSMUR).}
\thanks{All authors are with the Coordinated Science Lab, University of Illinois at Urbana–Champaign,          Urbana, IL 61801, USA.
        {\tt\small \{vhebbar2,langbort\}@illinois.edu}}%
}
\begin{document}
















\maketitle
\thispagestyle{empty}
\pagestyle{empty}

\begin{abstract}
Motivated by recent works in the communication and psychology literature, we model and study the role social identity -- a person's sense of belonging to a group -- plays in human information consumption. A hallmark of Social Identity Theory (SIT) is the notion of `status', i.e., an individual's desire to enhance their and their `in-group's' utility \textit{relative} to that of an `out-group'. In the context of belief formation, this comes off as a desire to believe positive news about the in-group and negative news about the out-group, which has been empirically shown to support belief in misinformation and false news. 

We model this phenomenon as a Stackelberg game being played over an information channel between a news-source (sender) and news-consumer (receiver), with the receiver incorporating the `status' associated with social identity in their utility, in addition  to accuracy.  We characterize the strategy that must be employed by the sender to ensure that its message is trusted by receivers of all identities while maximizing their overall quality of information. We show that, as a rule, this optimal quality of information at equilibrium decreases when a receiver's sense of identity increases.  We further demonstrate how extensions of our model can be used to quantitatively estimate the level of importance given to identity in a population.
\end{abstract}

\section{INTRODUCTION}

Social Identity Theory (SIT), proposed by Tajfel and Turner \cite{Tajfel79}, seeks to formalize the notion that humans see the world through an `us' vs `them' lens. In doing so, it proposes the existence of three cognitive processes - categorization (breaking down society into categories e.g. races, genders), identification (assigning a set of categories for oneself and adopting pertinent norms) and comparison (comparing the categories assigned to oneself with others to determine one's status in the society). 

This last element of comparison hallmarks the theory of social identity and differentiates it from the idea of homophily (the tendency to form connections with similar individuals) in that, the status of an individual and their in-group is measured relative to an out-group. Under the lens of social identity theory, a decrease in the welfare of the out-group and an increase in welfare of the in-group both lead to an increased social status for the individual. Accordingly, any information that reflects negatively on the receiver's in-group affects their posterior beliefs differently than information that speaks positively of it. This variation in posteriors differs, however, from confirmation bias, as belief updates under the latter do not depend on the `nature' of the received information with respect to receiver's type, but only on its closeness to the their prior. 

Many models have been presented in a bid to capture the effect of identity in economic behaviour, in applications ranging from income redistribution and immigration \cite{Shayo09} to poverty and social exclusion \cite{Akerlof00}. 
Shayo \cite{Shayo09} proposes a game theoretic-model where the `identification' and `comparison' steps are taken simultaneously in equilibrium. In this framework, agents trade-off the relative `distance' they feel to an identity group with the social status they accrue from identifying with it. Akerlof and Kranton \cite{Akerlof00} consider a similar model but with the additional feature that payoffs from the actions depend explicitly on the identity group chosen.  In our work, we will assume that identities have already been assigned to agents and will seek to model the effect these identities have on the agent's beliefs. 

Identity is known to play a cornerstone role in belief of information, in particular, false news and misinformation. Bavel and Pereira \cite{Bavel18} provide empirical evidence suggesting that belief in misinformation is better explained by an identity driven processes than through confirmation bias or ideological value (belief in news that agree with one's ideologies) processes. Kahan \cite{Kahan16} theorises that belief in misinformation is driven by the psychological desire to signal and re-affirm ones identity, and thus status, in society. A prominent modern day example of socio-political identities driving belief in false news is seen among the deniers of climate change \cite{Bliuc15}. Thus, modeling the effect of social identity on misinformation can better inform our search for counter mechanisms to this ubiquitous issue. 

In our work, we propose a model with three elements -- source, sender and receiver -- acting over an information channel, with the source and the receiver having `types' (or in our context `identities'). Both the sender and receiver have their respective utilities and are thus incentivized to send and interpret information in strategic ways.  This approach is closely related to the idea of strategic information transmission \cite{Sobel82,Kamenica11}. 
More precisely, we view the interaction between sender and receiver as a Stackelberg game with the sender as the leader and the receiver as the follower. The role of social identity in this interaction is captured through the utilities of the agents.

Our work parallels the growing literature on (mis)information dynamics over networks, with many works using notions of antagonism and homophily among interacting agents to explain partisanship. The role of homophily in information propagation over networks, as well as extensions that incorporate misinformation explicitly have been studied recently \cite{Golub09,Acemoglu21}. Much existing work seeking to explain the creation and growth of polarization consider opinion dynamical models with antagonistic agents \cite{Hendrickx14}\textemdash\cite{Altafini13}. Leonard et al. \cite{Naomi21} consider opinion dynamical model incorporating political self-reinforcement (party polarization driven by internal mechanisms, e.g. eliminating moderating voices) and reflexive partisanship (party elites shunning a policy because the other side supports it) to explain the growth of polarization among political elite and validate their findings using historical data. Both the mechanisms considered in their work can be closely tied to the notion of political identity. Deviating from the standard DeGroot model for opinion dynamics that takes weighted averaging opinion updates,  Mei et al. \cite{Dorfler19} propose a weighted median opinion update model to better explain polarization and echo chamber formation.

Such studies over networks give us useful insights into the `virality' and persistence of inaccurate opinions and fake news, and can be used to arrive at counter-mechanisms to curb this spread. While all such models seek to understand how (mis)information (or the beliefs induced by them) spreads, we seek to look more closely at how an individual's beliefs are shaped by the presence of misinformation. In particular, our contribution lies in modeling how identity influences this process. Our work is closest in  spirit to the work by Mullainathan and Shleifer \cite{Mullainathan05} (which also inspired the reference to a 'market' in our paper's title) wherein, they try to model media slant stemming from readers' biases and study the effect of reader heterogeneity on the same. 

In Section \ref{sec:model}, we will outline the different elements of our model and introduce the utilities of the agents in our model. We also introduce an illustrative example to help us better interpret the results we obtain throughout this paper.  In Section \ref{sec:game_struct}, we will detail the structure of the game induced between the receiver and the sender and characterize the corresponding equilibrium. In Section \ref{sec:analytic_results} we will highlight and visualize some deeper implications of our model. We will see that these results act both to validate our model and to offer insight into a property counter-measures to curb belief in misinformation may possess. Finally, in Section \ref{sec:exploit}, we will augment our model with an observation mechanism allowing us to estimate parameters that define the receiver behaviour when they are a priori unknown. In doing so, we will illustrate the versatility of our model in being able to estimate an imperfectly characterized receiver's behaviour. 

This process of estimating a receiver's behavioural parameters from observations is tied to the broader literature of learning human preferences by observing their actions \cite{Tenenbaum09}. These approaches usually assume humans are Bayesian utility maximizing agents to learn their reward functions given a sequence of optimal actions. Such methods have found wide applicability in domains ranging from robotics \cite{Ng04} to natural language processing (NLP) \cite{Goodman13}. 

\section{Our Model} \label{sec:model}
We consider a communication theoretic model with three elements - source, sender and receiver. Associated with the source are two boolean random variables $(X,\theta)$, (we will call this the `state' and the `type' respectively) with some joint distribution $\pi$. For the purpose of this work, we will assume both $X\in\{0,1\}$ and $\theta\in\{A,B\}$ are Bernoulli random variables with a uniform distribution. We also assume that this distributional information is known to both the sender and the receiver. 

The sender encodes the information about the source (that is, $(X,\theta)$) into a Boolean `message' $Y\in\{a,b\}$ and sends it to the receiver. The receiver then decodes the message to make an `estimate' $\hat X$ on the true state of the source. The receiver also has a type $\bar\theta\in\{A,B\}$ which will influence how receiver chooses to decode the message. Figure \ref{fig:channel} depicts the standard information theoretic representation for our communication channel. 

The illustration also highlights the encoding and decoding strategies available to sender and receiver respectively. For instance, $m_A$ denotes the probability that the sender sends out the message $Y=a$ when the source has state $X=1$ and $p_{\bar{\theta}}$ signifies the probability that a receiver of type $\bar{\theta}$ infers $\hat X = 1$ from the message $Y=a$.

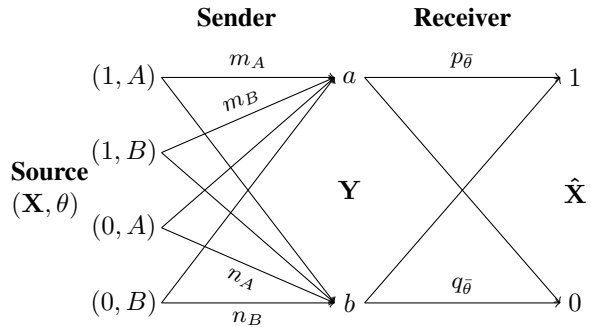
\begin{figure}[h]
    \centering



\begin{tikzpicture}

  \node at (-4,0) [text width=1cm]{\textbf{Source} $\displaystyle \mathbf{(X,\theta)}$};
  \node at (-3,1.5) {$\displaystyle (1,A)$};
  \node at (-3,0.5) {$\displaystyle (1,B)$};
  \node at (-3,-0.5) {$\displaystyle (0,A)$};
  \node at (-3,-1.5) {$\displaystyle (0,B)$};
  
  \node at (-1.5,2.3) [align=center, text width=2cm]{\textbf{Sender}};
  
  \draw [->] (-2.5,1.5) -- (-0.2,1.5) node [midway, sloped,above] {\small{$m_A$}};
  \draw [->] (-2.5,1.5) -- (-0.2,-1.5) node [midway, sloped,above] {};
  \draw [->] (-2.5,0.5) -- (-0.2,1.5) node [midway, sloped,above] {\small{$m_B$}};
  \draw [->] (-2.5,0.5) -- (-0.2,-1.5) node [midway, sloped,above] {};
  
  \draw [->] (-2.5,-1.5) -- (-0.2,1.5) node [midway, sloped,above] {};
  \draw [->] (-2.5,-1.5) -- (-0.2,-1.5) node [midway, sloped,below] {\small{$n_B$}};
  \draw [->] (-2.5,-0.5) -- (-0.2,1.5) node [midway, sloped,below] {};
  \draw [->] (-2.5,-0.5) -- (-0.2,-1.5) node [midway, sloped,below] {\small{$n_A$}};
  
  \node at (1.5,2.3) [align=center, text width=2cm]{\textbf{Receiver}};
  
  \node at (0,1.5) {$\displaystyle a$};
  \node at (0,0) {$\displaystyle \mathbf{Y}$};
  \node at (0,-1.5) {$\displaystyle b$};
  
  \draw [->] (0.2,1.5) -- (2.8,1.5) node [midway, sloped,above] {\small{$p_{\bar \theta}$}};
  \draw [->] (0.2,1.5) -- (2.8,-1.5) node [midway, sloped,above] {};
  
  \draw [->] (0.2,-1.5) -- (2.8,1.5) node [midway, sloped,above] {};
  \draw [->] (0.2,-1.5) -- (2.8,-1.5) node [midway, sloped,above] {\small{$q_{\bar \theta}$}};
  \node at (3,1.5) {$\displaystyle 1$};
  \node at (3,0) {$\displaystyle \mathbf{\hat{X}}$};
  \node at (3,-1.5) {$\displaystyle 0$};
  
  \end{tikzpicture}
  
    \caption{\textbf{(A classical communication channel representation of our model.)} The sender plays the role of the encoder while the receiver plays the role of a decoder.}
    \label{fig:channel}
\end{figure}

The types (both for the source and the receiver) will signify the identity of agents in our model. The way in which they impact the receiver's utility is best motivated in the context described below.

Let the source be a politician and let the state $X$ denote whether they are corrupt ($X=1$) or not $(X=0)$. The type $\theta$ then naturally indicates their political affiliation to one of two political groups $A$ or $B$. The sender can be viewed as a news agency who seeks to inform the receiver about the true nature of a politician through a news article (the `message'). Let us assume that the message $Y=a$ loosely indicates that the politician in question is corrupt while $Y=b$ signifies that the politician is not corrupt. Finally, the receiver may simply be an individual seeking to form an opinion on the politician. The receiver's type is an indication of their partisan nature and reflects their political identity. $\hat X=1$ ($\hat X=0$) indicates the receiver's opinion that the politician in question is corrupt (not corrupt). We will revisit this illustrative example repeatedly to highlight key aspects of our model as well as to elucidate the implications of our results.   




\subsection{Receiver Utility}
We model the receiver utility to include three components - accuracy utility, social identity utility and economic utility. Mathematically, for a receiver with type $\bar \theta$ we have
\begin{flalign*}
U^R(x,\hat x, \theta, \bar\theta) = \lambda_a^{\bar \theta} u^a(x,\hat x) + \lambda_s^{\bar \theta} u^s(\hat x, \theta, \bar\theta) + u^e(x),
\end{flalign*}
where $\lambda_a^\theta\in \bR_{\geq0}$ is the receivers accuracy weight and $\lambda_s\in \bR_{\geq0}$ is the receivers identity weight. The accuracy utility $u^a$, as the name suggests, offers a utility when the receiver's estimate of the state of the source is accurate. Mathematically,
\begin{flalign*}
    u^a(x,\hat x) = \delta_{x=\hat x}
\end{flalign*}
where $\delta$ is the usual indicator function. The social identity utility is mathematically given by
\begin{flalign*}
    u^s(\hat x, \theta, \hat \theta) = \begin{cases} 
    -\Delta_I^{\bar \theta} & \hat x=1, \theta = \bar \theta \\
    -\Delta_O^{\bar \theta} & \hat x=0, \theta \neq \bar \theta \\
    0 & \text{otherwise.}
    \end{cases}
\end{flalign*}

where $\Delta_O^{\bar\theta},\Delta_I^{\bar\theta}\in\bR_{\geq 0}$. The motivation for this utility is best seen by revisiting our illustrative example from Section \ref{sec:model}. $\Delta_I^{A}$, for instance, is the dis-utility to receiver of type $A$ when they believe that a politician of type $A$ is corrupt. Note that this utility does not depend on the true state of the source but only on the receiver's estimate of it and is thus, a psychological dis-utility. Likewise, $\Delta_O^{A}$ is the dis-utility to receiver of type $A$ when they believe that a politician of type $B$ is not corrupt. It is easy to see that this simple utility function captures the essence of social identity. Since the status of a group under Social Identity Theory is always seen as a relative metric, belief in both -- positive news about out-group politician and negative news about in-group politician -- can cause a decrease in in-group's social status.

Finally, we introduce the economic utility as
\begin{flalign*}
    u^e(x)=\delta_{x=0}
\end{flalign*}
which, in our politician story, simply indicates that every individual receives a utility from the politician in power being non-corrupt. Note that this utility depends only on the true state of the source and not on the receivers estimate. As such, this utility will be exogenous and independent of the receivers decoding strategy. Likewise, we could have included an additive component in the social identity utility that depends on the true state of the source, say an additional dis-utility if the true state of the in-group politician is corrupt. But once again, such a dis-utility will be exogenous to receiver's strategy and can be neglected for the purpose of our model. Going forward we will work with the truncated utility for the receiver as
\begin{flalign}
U^R(x,\hat x, \theta, \bar\theta) = \lambda_a^{\bar \theta} u^a(x,\hat x) + \lambda_s^{\bar \theta} u^s(\hat x, \theta, \bar\theta). \label{eq:rec_util}
\end{flalign}

\subsection{Sender Utility}
We consider the sender to require two things -- receiver's belief in their message and the receiver estimating the state of the true source accurately. We consider two receivers (one of either types) and we model the utility of the sender as
\begin{flalign}
    U^S(x,\hat x,\mathbf{p},\mathbf{q}) = \begin{cases}
        \delta_{x=\hat x} & \mathbf{p}>\mathbf{\frac{1}{2}}, \mathbf{q}>\mathbf{\frac{1}{2}}\\
        0 & \text{otherwise},
    \end{cases} \label{eq:sen_util}
\end{flalign}
where $\mathbf{p}=[p_A\; p_B]^T,\mathbf{q}=[q_A\; q_B]^T$ and $\mathbf{\frac{1}{2}}=[\frac{1}{2}\; \frac{1}{2}]^T$. We assume that if a receiver chooses not to believe in the sender ($p_{\bar\theta}\leq\frac{1}{2}$ or $ q_{\bar\theta}\leq\frac{1}{2}$), they will unsubscribe from the sender leading to loss in sender utility. In our illustrative example, an individual may choose to unsubscribe from a news agency if they stop believing in the veracity of the message $Y$. On the other hand, the sender also receives utility from the receiver being well-informed when they are `subscribed' to the sender and this is captured using the indicator function $\delta$ over the receiver estimate matching the true state of the source.  

\section{Game Over Information Channel} \label{sec:game_struct} 

Having laid out the receiver's and the sender's utilities and strategy sets, we now dwell into the information structure of our game. For now, we assume that the utility function of the receiver is common knowledge. The distribution on the true state and the type of the source is also common knowledge, but the type of the receiver is unknown to the sender. Before the game plays out, the sender commits to an encoding strategy and announces it to the receiver. This assumption of commitment is best understood in the context of our politician story, in which the sender (news agency) has to uphold a stance (a certain bias or lack thereof in reporting style). For instance, an openly partisan news agency would be expected to continue being so or risk losing credibility among its readership. 

\subsection{Receiver Optimal Strategy} \label{sec:R_opt_st}
For every message $Y$ generated by the sender strategy, the receiver picks their best response to maximize expected receiver utility 
\begin{flalign} 
    \max_{p_{\bar\theta},q_{\bar\theta}} \underbrace{\Ex[u^R(X,\hat X,\theta, \bar\theta)|Y]}_{\triangleq f(\bar\theta,Y)}, \label{eq:exp_rec_util}
\end{flalign}
where the expectation is taken over the distribution of messages $Y$ generated by the sender, distribution of the source state and type, and the distribution generated by the receiver strategy. Noting that the expectation above is linear in receiver strategy, it is easy to see that the optimal strategy for the receiver is given by\footnote{Strictly speaking, $p^*_{\bar\theta}$ is not unique when $\frac{\partial f(\bar\theta,a)}{\partial p}=0$ but we pick $p^*_{\bar\theta}=1$ as a convention. Likewise for $q^*_{\bar\theta}$.}
\begin{subequations}
\label{eq:opt_rec_strat}
\begin{flalign}
    p^*_{\bar\theta} & = \begin{cases}
        1 & \frac{\partial f(\bar\theta,a)}{\partial p}\geq0 \\
        0 & \frac{\partial f(\bar\theta,a)}{\partial p}<0,
    \end{cases} \label{eq:opt_rec_p}\\
    q^*_{\bar\theta} & = \begin{cases}
        1 & \frac{\partial f(\bar\theta,b)}{\partial q}\geq0 \\
        0 & \frac{\partial f(\bar\theta,b)}{\partial q}<0. \label{eq:opt_rec_q}
    \end{cases}
\end{flalign}
\end{subequations}
Thus, we see that the receiver employs pure strategies. 
\subsection{Sender Optimal Strategy} \label{sec:S_opt_st}
The sender on the other hand maximises its own expected utility assuming that the receiver is playing their best response. Mathematically, the sender picks its strategy to optimize
\begin{subequations}
\label{eq:complex_LP}
\begin{flalign}
    \max_{m_A,m_B,n_A,n_B\in[0,1]} \; & \Ex[\delta_{X=\hat X^*(Y)}] = \Pr(X=\hat X^*(Y)) \label{eq:complex_LP_obj}\\
    S.T \quad & \frac{\partial f(A,a)}{\partial p}\geq0, \quad \frac{\partial f(A,b)}{\partial q}\geq0 \label{eq:complex_LP_A_believes}\\
    & \frac{\partial f(B,a)}{\partial p}\geq0, \quad \frac{\partial f(B,b)}{\partial q}\geq0 \label{eq:complex_LP_B_believes}
\end{flalign}
\end{subequations}
where the expectation is taken over the distribution of the sources state and type, and over the distribution over messages generated by the sender's strategy. 


In (\ref{eq:complex_LP_obj}), $\hat{X}^*(Y)$ denotes the state estimate generated by the receiver's best response upon receiving message $Y$. Indeed, based on (\ref{eq:opt_rec_strat}), such a best response is nothing but a deterministic map from $\{a,b\}$ to $\{0,1\}$. (Simply, $a\mapsto \hat{X}^*(a)=1$ and $b\mapsto \hat{X}^*(b)=0$.) Also note that the constraints in (\ref{eq:complex_LP_A_believes}) and (\ref{eq:complex_LP_B_believes}) are a result of incorporating the receiver's best response into the sender utility, and act to ensure that the optimal receiver response is to always `believe' the sender. This results in a Stackelberg game between the sender (leader) and the receiver (follower), in accordance with the assumption of commitment on part of the sender.


With these elements in place, we can prove the following 
\begin{lem}
The optimization problem in (\ref{eq:complex_LP}) is an linear program in $m_A,m_B,n_A,n_B$.  Moreover, under our assumption of uniform distribution over the source states and types we can show that maximizing the expectation in (\ref{eq:complex_LP_obj}) is equivalent to maximizing the sum 
\begin{flalign}
    Q(m_A,n_A,m_B,n_B) \triangleq m_A+m_B+n_A+n_B, \label{eq:Q_defn}
\end{flalign}
where $Q$ denotes the `quality of information` available to the receiver. \label{lem:qual_info}
\end{lem}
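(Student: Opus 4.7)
The plan is to prove the two claims in the lemma separately, starting with the objective and then the constraints.

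First, I would tackle the equivalence of objectives. Since the receiver's best response (\ref{eq:opt_rec_strat}) is a deterministic map $a\mapsto 1, b\mapsto 0$ whenever the believability constraints in (\ref{eq:complex_LP_A_believes})-(\ref{eq:complex_LP_B_believes}) hold, the event $\{X=\hat X^*(Y)\}$ decomposes as $\{X=1,Y=a\}\cup\{X=0,Y=b\}$. Using the uniform prior $\pi$ so that each of the four source outcomes $(x,\theta)$ has probability $1/4$, and marginalizing over $\theta$, I get
\begin{flalign*}
\Pr(X{=}\hat X^*(Y)) &= \Pr(X{=}1,Y{=}a)+\Pr(X{=}0,Y{=}b)\\
&=\tfrac{1}{4}(m_A+m_B)+\tfrac{1}{4}(n_A+n_B)=\tfrac{Q}{4},
\end{flalign*}
so maximizing (\ref{eq:complex_LP_obj}) is equivalent to maximizing $Q$.

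Next, I would show that the four constraints (\ref{eq:complex_LP_A_believes})-(\ref{eq:complex_LP_B_believes}) can be rewritten as linear inequalities. Using the definition of the receiver's utility (\ref{eq:rec_util}), I would expand $f(\bar\theta,a)=\Ex[U^R\mid Y=a]$ by conditioning on $(X,\theta)$ and observing that it is affine in $p_{\bar\theta}$ with slope
\begin{flalign*}
\tfrac{\partial f(\bar\theta,a)}{\partial p}=\lambda_a^{\bar\theta}\bigl[\Pr(X{=}1|Y{=}a)-\Pr(X{=}0|Y{=}a)\bigr]\\
+\lambda_s^{\bar\theta}\bigl[-\Delta_I^{\bar\theta}\Pr(\theta{=}\bar\theta|Y{=}a)+\Delta_O^{\bar\theta}\Pr(\theta{\neq}\bar\theta|Y{=}a)\bigr],
\end{flalign*}
and analogously for the other three slopes. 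Each conditional probability is a ratio with denominator $\Pr(Y=a)$ (or $\Pr(Y=b)$), and both denominators are \emph{linear} in $(m_A,m_B,n_A,n_B)$ under the uniform prior, e.g.\ $\Pr(Y{=}a)=\tfrac{1}{4}(m_A+m_B+(1-n_A)+(1-n_B))$. The joint probabilities in the numerators are likewise linear in the strategy variables.

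The key move is to clear denominators: multiplying each inequality by the (nonnegative) quantity $4\Pr(Y=a)$ or $4\Pr(Y=b)$ turns it into a linear inequality in $(m_A,m_B,n_A,n_B)$, and this transformation preserves the feasible set because when the denominator vanishes the original conditional expectation is undefined and the constraint is vacuous while the cleared constraint is automatically satisfied ($0\geq 0$). Together with the box constraints $m_A,m_B,n_A,n_B\in[0,1]$ and the linear objective $Q$ established above, this exhibits (\ref{eq:complex_LP}) as an LP. The main conceptual obstacle is justifying the denominator-clearing step rigorously on the measure-zero boundary where $\Pr(Y=a)=0$ or $\Pr(Y=b)=0$; I would handle it by the brief convention argument just given, noting that in such degenerate cases the receiver's believability condition has no bite and the tie-breaking rule in footnote 1 applies.
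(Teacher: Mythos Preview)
Your proposal is correct and follows essentially the same approach as the paper: expand the objective using the uniform prior and the deterministic best-response map $a\mapsto 1,\ b\mapsto 0$ to obtain $\Pr(X=\hat X^*(Y))=Q/4$, then expand the conditional expectations defining the believability constraints to exhibit them as linear inequalities in $(m_A,m_B,n_A,n_B)$. Your explicit discussion of the denominator-clearing step (multiplying through by $4\Pr(Y{=}a)$ or $4\Pr(Y{=}b)$) is, if anything, more careful than the paper's own proof, which simply presents the resulting linear forms without comment on this passage.
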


Clearly if $Q=4$, then the sender is always truthful and the receiver can perfectly recreate the state of the source. On the other hand, if $Q<4$ the receiver loses information about the source due to the sender's encoding strategy.  
While this linear program can be solved using any one of the widely available numerical solvers, it can also be readily solved in closed form under an additional assumption on the receiver's identity parameters  ($\Delta_O^{\bar{\theta}}, \Delta_I^{\bar{\theta}}$) which we introduce in the next section. We see that our assumption offers us a large gain in analytical mileage while being only mildly restrictive. 

\subsection{Equilibrium in a Restricted Game}
Going forward we will make the following
\begin{assum} \label{ass:delta_rel}
    $$\Delta_O^{\bar\theta}\geq\Delta_I^{\bar\theta} \quad \forall \bar\theta \in \{A,B\}.$$
\end{assum}
Before making use of this assumption\footnote{We could have done a similar analysis if we instead had $\Delta_O^{\bar\theta}\leq\Delta_I^{\bar\theta} \quad \forall \bar\theta \in \{A,B\}$. The crucial element of our assumption is the existence of the same ordering between the two identity parameters holding true regardless of $\bar{\theta}$.} to simplify the linear program in (\ref{eq:complex_LP}), we note that it is qualitatively supported by a number of works in the Communication literature. Indeed, the study  \cite{Linden21} finds that negative language about out-group played a strong role in social media engagement through an empirical study on Facebook and Twitter. Moreover, works such as \cite{Etter11} have shown  that negative news content spreads faster and wider on social media.  While the intent to share a negative post about an out-group may not directly indicate the utilities gleaned from the process, we hypothesize that the relative gain in social status from doing so is higher than from sharing news about the in-group. 



Under Assumption \ref{ass:delta_rel} we can make the following 
\begin{prop} \label{prop:simplify_LP}
The optimal solution to (\ref{eq:complex_LP}) is given by $m_A^*=m_B^*=1$, and $(n_A^*,n_B^*)$ the optimal point of the linear program,
\begin{subequations} \label{eq:simple_LP}
\begin{flalign}
& \max_{n_A,n_B\in[0,1]}  n_A+n_B \\
     S.T \; & n_A(\lambda_s^A \Delta_I^A + \lambda_a^A)+n_B(\lambda_a^A-\lambda_s^A\Delta_O^A)\geq 0 \label{eq:A_believes}\\
      & n_B (\lambda_s^B \Delta_I^B + \lambda_a^B)+n_A(\lambda_a^B-\lambda_s^B\Delta_O^B)\geq 0 \label{eq:B_believes}
\end{flalign}
\end{subequations}
\end{prop}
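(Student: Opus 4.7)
The plan is to prove Proposition \ref{prop:simplify_LP} in three stages: (i) rewrite the trust constraints (\ref{eq:complex_LP_A_believes})--(\ref{eq:complex_LP_B_believes}) as explicit linear inequalities in $(m_A,m_B,n_A,n_B)$; (ii) use Assumption \ref{ass:delta_rel} to discard the constraints arising from $Y=a$; and (iii) show by a perturbation argument that any optimum satisfies $m_A^* = m_B^* = 1$, thereby reducing (\ref{eq:complex_LP}) to (\ref{eq:simple_LP}).

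For stage (i), I would multiply each partial-derivative inequality by the strictly positive quantity $4 P(Y)$, which preserves the inequality direction. After applying Bayes' rule under the uniform prior, each constraint becomes an affine function of $(m_A,m_B,n_A,n_B)$---call them $g_\theta(m,n)\geq 0$ and $h_\theta(m,n)\geq 0$ for the $Y=a$ and $Y=b$ cases respectively. This simultaneously establishes the linearity claim. For stage (ii), a direct algebraic computation yields the key identity $g_\theta(m,n) - h_\theta(m,n) = 2\lambda_s^\theta(\Delta_O^\theta - \Delta_I^\theta)$, which is non-negative by Assumption \ref{ass:delta_rel}. Hence the $Y=a$ constraints are redundant, and the problem reduces to maximizing $Q$ subject only to $h_A\geq 0$, $h_B\geq 0$, and $m_\theta, n_\theta \in [0,1]$.

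For stage (iii), my first observation is that for each $\theta$ the coefficients of $m_\theta$ and $n_\theta$ sum to $2\lambda_a^A\geq 0$ inside $h_A$ and to $2\lambda_a^B\geq 0$ inside $h_B$. Therefore the joint perturbation $(m_\theta, n_\theta) \to (m_\theta + \epsilon, n_\theta + \epsilon)$ is feasible whenever both variables are strictly below $1$ and strictly increases $Q$, forcing $\max(m_\theta, n_\theta) = 1$ at any optimum. To exclude the residual case $n_A = 1,\ m_A < 1$ (the counterpart with $\theta=B$ being symmetric), I would apply a second perturbation $(m_A, n_B) \to (m_A + \epsilon, n_B - \delta)$: leveraging Assumption \ref{ass:delta_rel}, I can show that $\delta/\epsilon$ can be picked in a non-empty subinterval of $[0,1)$ such that both $h_A$ and $h_B$ remain non-negative while $Q$ strictly increases, contradicting optimality. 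Substituting $m_A = m_B = 1$ into $h_A, h_B$ then recovers precisely (\ref{eq:A_believes}) and (\ref{eq:B_believes}), and the objective becomes $2 + n_A + n_B$, completing the reduction to (\ref{eq:simple_LP}).

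The main obstacle will be stage (iii), specifically validating the second perturbation: its construction requires careful case-by-case sign analysis of $\lambda_a^\theta - \lambda_s^\theta \Delta_I^\theta$ and $\lambda_a^\theta - \lambda_s^\theta \Delta_O^\theta$ (either can be negative), together with a separate treatment of the degenerate case $\Delta_I^\theta = \Delta_O^\theta$, in which the perturbation yields only a weak improvement and one must \emph{select} (rather than prove uniqueness of) an optimum with $m_A^* = m_B^* = 1$.
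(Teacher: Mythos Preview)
Your stages (i) and (ii) match the paper exactly: the paper writes out the four linear constraints (\ref{eq:A_believes_a})--(\ref{eq:B_believes_b}) and then uses precisely the identity $g_\theta-h_\theta=2\lambda_s^\theta(\Delta_O^\theta-\Delta_I^\theta)\geq 0$ to drop the two $Y=a$ constraints under Assumption~\ref{ass:delta_rel}.

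Stage (iii) is correct and relies on the same key device as the paper---the perturbation $(m_A,n_B)\to(m_A+\epsilon,n_B-\delta)$---but the argument is organized differently. The paper first records that $(1,0,1,0)$ is feasible (so $Q^*\geq 2$), then proves as a lemma that $n_B^*>0\Rightarrow m_A^*=1$ via exactly your ``second perturbation'' with the specific ratio $\delta/\epsilon=(\lambda_s^A\Delta_I^A-\lambda_a^A)/(\lambda_s^A\Delta_O^A-\lambda_a^A)$, and finally closes with a two-case split on the sign of $\lambda_a^\theta-\lambda_s^\theta\Delta_O^\theta$. Your ``first perturbation'' observation---that in each constraint the coefficients of $m_\theta$ and $n_\theta$ sum to $2\lambda_a^{\bar\theta}\geq 0$, so $\max(m_\theta,n_\theta)=1$ at any optimum---does not appear in the paper and is a cleaner replacement for the paper's Corollary-plus-case-split scaffolding. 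One caveat you should flag explicitly: your second perturbation as stated does not cover the boundary $n_B^*=0$, where you are forced to take $\delta=0$ and must instead use the strict slack $h_A(m_A,1,1,0)=\lambda_s^A\Delta_I^A(2-m_A)+\lambda_a^A m_A>0$ to absorb the (possibly negative) change in $h_A$; the paper sidesteps this by phrasing its lemma with hypothesis $n_B^*>0$ and burying the zero case inside its Case~I/II analysis.
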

First, note that the linear program in (\ref{eq:simple_LP}) always has a solution as $n_A=n_B=0$ is always a feasible point to the problem. The LP has two decision variables and two linear inequality constraints and can easily be solved analytically giving us
\begin{cor} \label{cor:simple_LP_soln}
The optimizer to (\ref{eq:simple_LP}) is given by,
\begin{flalign*}
    (n_A^*,n_B^*) = \begin{cases}
        (1,1) & k_A<0,k_B<0 \\
        (0,0) & k_B>k_A>0 \\
        (1,k_A) & 1>k_A>k_B \\
        (1,1) & k_A>1>k_B \\
        (\frac{1}{k_B},1) & k_A>k_B>1 \\
        \bigg(\min\{1,\frac{1}{k_B}\},1\bigg) & k_B>0>k_A 
    \end{cases}
\end{flalign*}
where
\begin{flalign}
    k_A \triangleq \frac{\lambda_s^A\Delta_I^A+\lambda_a^A}{\lambda_s^A\Delta_O^A-\lambda_a^A}, \;
    k_B \triangleq \frac{\lambda_s^B\Delta_O^B-\lambda_a^B}{\lambda_s^B\Delta_I^B+\lambda_a^B}. \label{eq:kA_kB_defn}
\end{flalign}
\end{cor}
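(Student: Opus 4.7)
The plan is to exploit the fact that (\ref{eq:simple_LP}) is a two-dimensional linear program in the unit box whose objective has gradient $(1,1)$, so the maximizer must lie at a vertex of the feasible polytope; the proof then reduces to a case analysis identifying which vertex is optimal. I would first introduce the shorthand $\alpha_{\bar\theta}:=\lambda_s^{\bar\theta}\Delta_I^{\bar\theta}+\lambda_a^{\bar\theta}$ and $\beta_{\bar\theta}:=\lambda_s^{\bar\theta}\Delta_O^{\bar\theta}-\lambda_a^{\bar\theta}$, so that (\ref{eq:A_believes}) becomes $n_A\alpha_A\geq n_B\beta_A$ and (\ref{eq:B_believes}) becomes $n_B\alpha_B\geq n_A\beta_B$. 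In any non-degenerate instance one has $\alpha_{\bar\theta}>0$, so the sign of $k_A$ matches that of $\beta_A$ and the sign of $k_B$ matches that of $\beta_B$.

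Next I would observe that (\ref{eq:A_believes}) is automatically satisfied on $[0,1]^2$ whenever $\beta_A\leq 0$, and otherwise rewrites as the half-plane $n_B\leq k_A n_A$; symmetrically, (\ref{eq:B_believes}) is trivial for $\beta_B\leq 0$ and reads $n_B\geq k_B n_A$ for $\beta_B>0$. This partitions parameter space into four sign regimes for $(\beta_A,\beta_B)$. In the regime $\beta_A,\beta_B\leq 0$ both constraints are vacuous and $(1,1)$ is optimal, giving line~1. In $\beta_A>0,\beta_B\leq 0$ only A's constraint binds; setting $n_A=1$ and $n_B=\min(1,k_A)$ yields $(1,1)$ when $k_A\geq 1$ (line~4, since then $k_A>1>k_B$) and $(1,k_A)$ when $k_A<1$ (line~3, since then $1>k_A>k_B$). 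The symmetric regime $\beta_A\leq 0,\beta_B>0$ sets $n_B=1$ and $n_A=\min(1,1/k_B)$, which is line~6. Finally, the regime $\beta_A,\beta_B>0$ combines the half-planes into the cone $k_B n_A\leq n_B\leq k_A n_A$: if $k_A<k_B$ only the origin is feasible (line~2); otherwise the optimum lies at the intersection of a binding edge with either $\{n_A=1\}$ or $\{n_B=1\}$, giving line~3, line~4, or line~5 according to whether $k_A$ and $k_B$ lie below, straddle, or lie above $1$.

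The step I expect to be the main obstacle is the bookkeeping, namely verifying that the six cases in the corollary---together with implicit tie-breaking at equalities---tile the four sign regimes cleanly, since several cases (e.g.\ line~3) span more than one regime and several regimes (e.g.\ both positive) split across more than one case. Drawing the feasible region in the $(n_A,n_B)$ plane and tracking how the two half-plane boundaries pivot about the origin as $k_A$ and $k_B$ cross $0$ and $1$ is the cleanest way to organize the analysis and to confirm exhaustiveness of the list.
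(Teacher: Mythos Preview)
Your proposal is correct and matches the paper's own (largely implicit) argument: the paper does not spell out a proof of the corollary beyond remarking that the two-variable LP ``can easily be solved analytically,'' but in Section~\ref{sec:est_k} it derives precisely your reformulation $k_B\leq n_B/n_A\leq k_A$ (equation~(\ref{eq:both_believe})) in the case $\beta_A,\beta_B>0$, and Claim~\ref{cla:opt_encode_unknown_para} records the resulting optimizer in the same form you obtain. Your sign-based partition into four regimes and the subsequent vertex analysis is exactly the intended route; the bookkeeping caveat you flag about overlapping case labels is real but cosmetic, since the overlapping cases (e.g.\ line~1 versus line~3 when both $k_A,k_B<0$) yield consistent answers once one observes that line~3 is only meaningful for $k_A\geq 0$.
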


The most important consequence of Corollary \ref{cor:simple_LP_soln} for our purposes is that the closed-form solution makes it possible to discuss the sender's optimal strategy qualitatively and, in particular, to study the influence of the various parameters accounting for the receiver's identity. 

\section{Analytical Results} \label{sec:analytic_results}
In this section, we will interpret some of the direct consequences of the equilibrium obtained in Corollary 3.3. In doing so, we will routinely revisit the illustrative example introduced in Section \ref{sec:model} to give more context and illustrate the implications of our results. First we have
\begin{cor} \label{cor:truth_negative_news}
    When $\Delta_O^{\bar\theta}>\Delta_I^{\bar\theta}$, the senders are \emph{always} completely truthful about `negative' states.
\end{cor}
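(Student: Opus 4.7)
The plan is to derive Corollary \ref{cor:truth_negative_news} as an essentially immediate consequence of Proposition \ref{prop:simplify_LP}, paired with an interpretive step unpacking what $m_A^* = m_B^* = 1$ means in the channel of Figure \ref{fig:channel}. Since the hypothesis $\Delta_O^{\bar\theta} > \Delta_I^{\bar\theta}$ strictly implies Assumption \ref{ass:delta_rel}, Proposition \ref{prop:simplify_LP} applies and yields $m_A^* = m_B^* = 1$ at the sender's optimum. By definition, $m_{\bar\theta}$ is the conditional probability of emitting $Y = a$ given $(X, \theta) = (1, \bar\theta)$, and $X = 1$ is the ``negative'' state of the source in the politician example (the politician being corrupt). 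Hence $m_A^* = m_B^* = 1$ says the sender deterministically emits $Y = a$ on every negative realization of the source, regardless of the source's type. Combined with the receiver's best-response rule (\ref{eq:opt_rec_strat}), which decodes $Y = a$ into $\hat X = 1$ whenever the receiver subscribes to the sender, this deterministic transmission is faithfully recovered, which is exactly the claim of complete truthfulness on negative states.

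The main obstacle is not in the corollary itself but in the antecedent Proposition \ref{prop:simplify_LP}: one must argue that lowering $m_A$ or $m_B$ can never slacken the believe-constraints (\ref{eq:complex_LP_A_believes})--(\ref{eq:complex_LP_B_believes}) enough to more than compensate, via increased $n_A + n_B$, for the unit-by-unit loss incurred in the objective $Q$ of Lemma \ref{lem:qual_info}. The strict form of the inequality in the hypothesis is what upgrades the conclusion from ``there exists a truthful optimum'' to ``every optimum is truthful on negative states'' (as a tie-breaker on the $m$-variables when the $p$-direction believe-constraint is not active at $m_{\bar\theta} = 1$), and thereby justifies the word ``always'' in the corollary. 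Beyond invoking Proposition \ref{prop:simplify_LP} and recording this tie-breaking remark, no further calculation is needed.
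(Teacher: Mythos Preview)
Your proposal is correct and matches the paper's treatment: the corollary is stated immediately after Proposition \ref{prop:simplify_LP} and Corollary \ref{cor:simple_LP_soln} with no separate proof, the intended justification being precisely that $m_A^*=m_B^*=1$ together with the channel interpretation of $m_{\bar\theta}=\Pr(Y=a\mid X=1,\theta=\bar\theta)$ gives deterministic truthful reporting on the negative state $X=1$. Your remark that the strict inequality $\Delta_O^{\bar\theta}>\Delta_I^{\bar\theta}$ is what rules out alternative optima with $m_{\bar\theta}<1$ (and hence justifies the word ``always'') is a correct reading of how strictness is used inside the proof of Proposition \ref{prop:simplify_LP} in the appendix.
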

Here, by `negative', we mean any state that causes loss in utility to the in-group, if realized. In our corrupt politician story, the `negative' state was `corrupt' ($X=1$) and so Corollary \ref{cor:truth_negative_news} implies that the news agency always truthfully reports when the source is corrupt. 

While Corollary \ref{cor:truth_negative_news} gives us a robust result on the nature of sender's equilibrium strategy, we would also like to know how the equilibrium varies with change in receiver characteristics (i.e change in $\lambda^{\bar\theta}$'s and $\Delta^{\bar\theta}$'s). This is the content of the following
\begin{claim} \label{cla:monotonicity}
    The quality of information $Q(m_A^*,m_B^*,n_A^*,n_B^*)$ available to receiver at equilibrium 
    \begin{itemize}
        \item decreases with increase in social identity levels ($\lambda_s^{\bar\theta}$) among the receivers,
        \item increases with increase in accuracy seeking levels ($\lambda_a^{\bar\theta}$) among the receivers, 
        \item increases with separation in relative importance of group identities ($\Delta_O^{\bar\theta}-\Delta_I^{\bar\theta}$) among the receivers. 
    \end{itemize}
\end{claim}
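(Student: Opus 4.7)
By Proposition \ref{prop:simplify_LP} the sender's optimum satisfies $m_A^* = m_B^* = 1$ for every admissible parameter choice, so $Q = 2 + n_A^* + n_B^*$ and it suffices to monitor the monotonicity of the LP value $n_A^* + n_B^*$ in each primitive receiver parameter.

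The first step of the plan is to tabulate $n_A^* + n_B^*$ across the six cases of Corollary \ref{cor:simple_LP_soln}: the value equals $2$ on cases 1 and 4, equals $0$ on case 2, equals $1 + k_A$ on case 3, equals $1 + 1/k_B$ on case 5, and equals $1 + \min\{1, 1/k_B\}$ on case 6. The qualitative payoff is that, on the interior of every one of these regions, $n_A^* + n_B^*$ is non-decreasing in $k_A$ and non-increasing in $k_B$, with no mixed dependence to track.

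The second step is to relate the primitive parameters to $(k_A, k_B)$ via direct differentiation of (\ref{eq:kA_kB_defn}). Writing $D_A \triangleq \lambda_s^A\Delta_O^A - \lambda_a^A$, the quotient rule gives
\begin{equation*}
\frac{\partial k_A}{\partial \lambda_s^A} = \frac{-\lambda_a^A(\Delta_I^A + \Delta_O^A)}{D_A^2}\leq 0, \quad \frac{\partial k_A}{\partial \lambda_a^A} = \frac{\lambda_s^A(\Delta_I^A + \Delta_O^A)}{D_A^2}\geq 0,
\end{equation*}
together with $\partial k_A/\partial \Delta_O^A = -\lambda_s^A(\lambda_s^A\Delta_I^A + \lambda_a^A)/D_A^2 \leq 0$ and sign-symmetric expressions for the four derivatives of $k_B$ (in particular $\partial k_B/\partial \lambda_s^B \geq 0$ and $\partial k_B/\partial \lambda_a^B \leq 0$). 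Chaining each of these signed derivatives with the piecewise structure of the first step delivers the claimed monotonicity of $Q$ within the interior of a single case; for the third bullet one decomposes a change in $\Delta_O^{\bar\theta} - \Delta_I^{\bar\theta}$ into its $\Delta_O^{\bar\theta}$ and $\Delta_I^{\bar\theta}$ components and applies the same chain rule.

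The main obstacle I expect is the discontinuous dependence of $(k_A, k_B)$ on the primitives: on the locus $\lambda_s^{\bar\theta}\Delta_O^{\bar\theta} = \lambda_a^{\bar\theta}$ the corresponding $k$ jumps from $-\infty$ to $+\infty$ and the case label of Corollary \ref{cor:simple_LP_soln} changes abruptly. To close the argument I would examine each such transition and verify that the finite jump in $n_A^* + n_B^*$ incurred at the associated case boundary agrees in direction with the within-case monotonicity, so that the map from primitive parameters to $Q$ remains monotone along every sweep. This case-by-case bookkeeping across the six regions, rather than any single derivative computation, is the technical heart of the argument.
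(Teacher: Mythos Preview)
The paper does not supply a formal proof of Claim~\ref{cla:monotonicity}. It states the claim immediately after Corollary~\ref{cor:simple_LP_soln}, remarks that the closed form ``makes it possible to discuss the sender's optimal strategy qualitatively,'' and then supports the claim only through intuition and the numerical illustrations in Figure~\ref{fig:opt_val_vs_lamb} (which vary the $\lambda$'s, not the $\Delta$'s). Your plan---tabulate $n_A^*+n_B^*$ across the six cases of Corollary~\ref{cor:simple_LP_soln}, differentiate $(k_A,k_B)$ in the primitives, and chain---is therefore not a different route but a formalization of exactly what the paper leaves implicit. For the first two bullets your derivatives have the correct signs and the argument goes through, modulo the case-boundary bookkeeping you already flagged.

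There is, however, a genuine gap in your treatment of the third bullet. You propose to ``decompose a change in $\Delta_O^{\bar\theta}-\Delta_I^{\bar\theta}$ into its $\Delta_O^{\bar\theta}$ and $\Delta_I^{\bar\theta}$ components and apply the same chain rule,'' but your own computation $\partial k_A/\partial\Delta_O^A=-\lambda_s^A(\lambda_s^A\Delta_I^A+\lambda_a^A)/D_A^2\le 0$ already shows that raising $\Delta_O^A$ (which widens the separation) \emph{lowers} $k_A$ and hence, by the within-case monotonicity you established, weakly \emph{lowers} $Q$. The same occurs on the $\Delta_I$ side: $\partial k_A/\partial\Delta_I^A=\lambda_s^A/D_A$, so lowering $\Delta_I^A$ also lowers $k_A$ when $D_A>0$. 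Directly from the LP~(\ref{eq:simple_LP}), increasing $\Delta_O^{\bar\theta}$ or decreasing $\Delta_I^{\bar\theta}$ can only tighten the corresponding constraint, so the feasible set shrinks and the optimal value cannot rise. Thus the chain-rule scheme you describe does not deliver the third bullet as stated; either that bullet requires an interpretation the paper does not spell out, or it is not correct in full generality. Since the paper offers no proof and its figures never vary the $\Delta$'s, there is nothing in the paper against which to check this point.
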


This claim very much agrees with intuition in the sense that a sender who seeks to maximise the quality of information available to receivers will do worse when the receivers themselves care more about their identity goals (and less about accuracy). While Claim \ref{cla:monotonicity} is a result of our modelling, it can also be viewed as a validation of our model as it agrees with empirical evidence on the role of social identity in misinformation \cite{Bavel18,Pennycook21}. Figure \ref{fig:opt_val_vs_lamb} illustrates the monotonic variation of quality of information with changes in social identity levels and accuracy seeking behaviour among receivers.  
\begin{figure}[ht]
    \centering
    \begin{subfigure}[b]{0.235\textwidth}
        \centering
        \includegraphics[width=\textwidth]{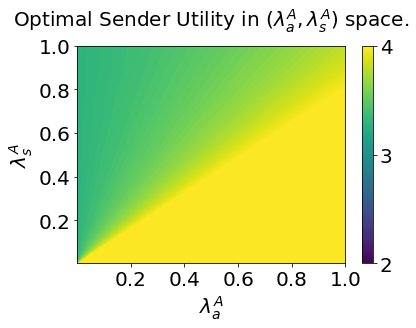}
        \caption{$\lambda_a^B=0.6,\lambda^B_s=0.4$}
        \label{fig:high_acc_lamb}
    \end{subfigure}
    \hfill
    \begin{subfigure}[b]{0.235\textwidth}
        \centering
        \includegraphics[width=\textwidth]{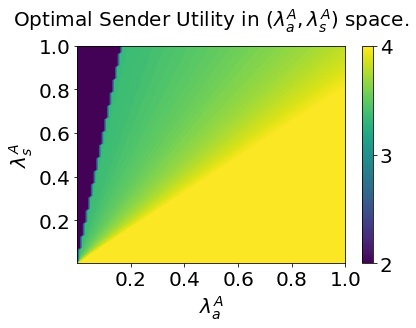}
        \caption{$\lambda_a^B=0.55,\lambda^B_s=0.45$}
        \label{fig:balance_acc_lamb}
    \end{subfigure}
    \hfill
    \begin{subfigure}[b]{0.235\textwidth}
        \centering
        \includegraphics[width=\textwidth]{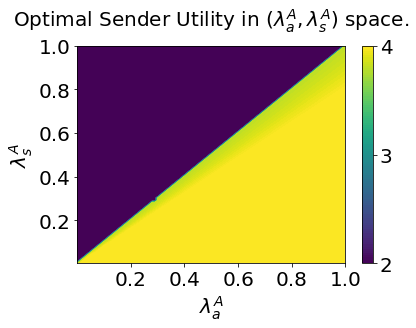}
        \caption{$\lambda_a^B=0.4,\lambda^B_s=0.6$}
        \label{fig:low_acc_lamb}
    \end{subfigure}
    \caption{\textbf{(Variation of quality of information with levels of identity.)} In the plots above, we have picked $\Delta_I^A=\Delta_I^B=1$, $\Delta_O^A=2$ and $\Delta_O^B=3.5$.}
    \label{fig:opt_val_vs_lamb}
\end{figure}

For fixed accuracy seeking and social identity levels in out-group (consider Figure \ref{fig:balance_acc_lamb}) we see that $Q$ drops sharply for high levels of in-group identity ($\lambda_s^A$) and low accuracy seeking levels($\lambda_a^A$). Further, with an increase in identity levels in an out-group receiver (going from Figure \ref{fig:balance_acc_lamb} to \label{ref:low_acc_lamb}) we see a sharp drop in $Q$ for some ranges of in-group identity levels. At the same time we see that in the general range where $\lambda_a^A>\lambda_s^A$ (yellow region in all plots in Figure \ref{fig:opt_val_vs_lamb}), increase in identity levels of an out-group receiver makes no difference to the quality of information available to the receivers. This shows that the sender can communicate a higher quality of information to a receiver of any type, even if only one of the identity groups cares enough about accuracy. In turn, interventions aimed at increasing the focus on accuracy (such as those proposed in \cite{Pennycook21,Thornhill19}) can be targeted at only one identity group.


\section{Exploiting Our Model} \label{sec:exploit}
We sought to model the effect of social identity with the specific aim of analyzing its role in misinformation belief.  In Section \ref{sec:analytic_results}, we obtained some results that aim to do so under the assumption that the sender has knowledge of the receiver payoffs. While we have made other assumptions, the knowledge of parameters defining the receiver utility ($\lambda_a^{\bar{\theta}},\lambda_s^{\bar{\theta}}, \Delta_I^{\bar{\theta}}$ and $\Delta_O^{\bar{\theta}}$) may be particularly difficult for the sender to obtain. 

In our politician story for instance, this simply means that news agencies do not have information about the identity and accuracy seeking levels among their audience. This would naturally be `learnt' over time through a sequence of `trial-and-error' attempts to know what their audience believes. In the same spirit, we will propose a thought experiment in Section \ref{sec:thought_exp} that, together with the structure of our problem, can be used to estimate these unknown quantities. These estimates can then be used by the sender to obtain its encoding strategy.  Going forward we will assume that the parameters defining receiver utility are unknown constants, but we will continue to work under Assumption \ref{ass:delta_rel}. 



\subsection{A Thought Experiment} \label{sec:thought_exp}



Not knowing the identity parameters of the receiver makes it impossible for the sender to solve (\ref{eq:simple_LP}) and obtain the optimal $n_A^*$ and $n_B^*$ for equilibrium. And so, in an attempt to estimate these unknowns, we postulate the existence of a mechanism for the sender to announce their strategy (i.e. $n_A$ and $n_B$, we already know $m_A=m_B=1$ for equilibrium)
to the receiver before the game is played out. The sender then responds whether or not they `believe' in the messages induced by this encoding strategy. A receiver of type $A$ (type $B$) believing in the message amounts to constraint (\ref{eq:A_believes}) (constraint (\ref{eq:B_believes}) respectively) being satisfied. Here we make the implicit assumptions that receiver's response is not noisy and can be used to accurately ascertain that the corresponding constraints are satisfied.

Now returning to the problem of estimation, we would like to design a procedure that takes as input a sequence of Boolean (`believe'/`don't believe') replies from the receiver and outputs the unknown parameters. Unfortunately, such a procedure would be ill-defined as the mapping from the set of identity parameters ($\lambda^{\bar\theta}$'s and $\Delta^{\bar\theta}$'s) to the sequence of replies is not one-to-one. However, the `augmented' identity parameters $k_A$ and $k_B$ as defined in (\ref{eq:kA_kB_defn}) can be identified uniquely. Fortunately, we can show that the knowledge of these augmented parameters is sufficient to ascertain receiver behaviour (i.e. to ascertain feasibility of (\ref{eq:A_believes}) and (\ref{eq:B_believes})) and, thus, for the sender to pick its optimal encoding strategy according to (\ref{eq:simple_LP}).

In Section \ref{sec:est_k}, we propose an algorithmic approach to estimating the augmented identity parameters ($k_A$ and $k_B$). In doing so, we also justify their sufficiency in evaluating receiver behaviour. Then in Section \ref{sec:use_k}, we use these estimates to arrive at sender's optimal encoding strategy. 

 
 \subsection{Estimating augmented identity parameters $k_A$ and $k_B$
 } \label{sec:est_k}
Note that we simply want feasible solutions to the set of linear inequalities given by (\ref{eq:A_believes}) and (\ref{eq:B_believes}). It is easy to see that if either $\lambda_s^A\Delta_O^A-\lambda_a^A\leq0$ or $\lambda_s^B\Delta_O^B-\lambda_a^B\leq0$ the corresponding receiver will believe the sender for any encoding strategy. In this case, we see that the end goal of obtaining the optimal sender strategy can be achieved even without explicitly computing the augmented identity parameters. 

So it suffices to consider the case when $\lambda_s^A\Delta_O^A-\lambda_a^A>0$ and $\lambda_s^B\Delta_O^B-\lambda_a^B>0$ and treat the former conditions as special cases. We can rework (\ref{eq:A_believes}) and (\ref{eq:B_believes}) to get, 
\begin{flalign}
    0 < k_B \leq \frac{n_B}{n_A}\leq k_A\label{eq:both_believe}
\end{flalign}
when $n_A\neq0$. Note that if $k_A<k_B$ then $n_A=n_B=0$ is the only feasible encoding strategy available to the sender. So we see that obtaining $k_A$ and $k_B$ is indeed sufficient to evaluate feasibility of an encoding strategy according to (\ref{eq:both_believe}). We now propose an algorithm that implements bisection search to obtain estimates for the augmented identity parameters. The algorithm takes two inputs - resolution $\delta>0$ to which we want our estimate to be accurate and upper bound $M>1$ on our search space for the parameters.
\begin{thm}
    The output $\hat{k}_A$ (and $\hat{k}_B$) of Algorithm $1$ satisfies $|k_A-\hat{k}_A|< \delta$ ($|k_B-\hat{k}_B|< \delta$) and is obtained in under $\big\lceil\log_2{\big(\frac{M}{\delta})}\big\rceil+1$ steps.   
\end{thm}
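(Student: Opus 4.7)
The approach is to reduce each receiver response to a single-bit comparison oracle against the unknown parameter, and then invoke the standard convergence analysis of bisection search.

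First, I would use the reformulation in (\ref{eq:both_believe}) from Section \ref{sec:est_k}: under the working case $\lambda_s^A \Delta_O^A - \lambda_a^A > 0$, the constraint (\ref{eq:A_believes}) is equivalent to $n_B/n_A \leq k_A$ whenever $n_A > 0$, and analogously (\ref{eq:B_believes}) becomes $n_B/n_A \geq k_B$. Hence, if the sender announces a test strategy with $n_A = 1$ and $n_B = r$, the Boolean response of receiver $A$ reveals exactly whether $k_A \geq r$ or $k_A < r$, and the response of receiver $B$ reveals whether $k_B \leq r$ or $k_B > r$. Each query thus yields one bit of information about the relevant augmented parameter.

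Next I would state the bisection invariant for estimating $k_A$ (the argument for $k_B$ is symmetric). Maintain an interval $[\ell, u] \subseteq [0, M]$ known to contain $k_A$, initialized to $[0, M]$ by the assumption that $M$ is an upper bound on the search space. At each iteration, query receiver $A$ at $r = (\ell + u)/2$, and replace $[\ell, u]$ by $[r, u]$ if the receiver believes and by $[\ell, r]$ otherwise. A one-line induction using the oracle property above shows that $k_A \in [\ell, u]$ is preserved, and that $u - \ell$ is halved at each step, so after $n$ iterations $u - \ell = M/2^n$. Returning $\hat{k}_A$ as the midpoint of the final interval gives $|k_A - \hat{k}_A| \leq (u-\ell)/2$. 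Choosing $n = \lceil \log_2(M/\delta) \rceil$ yields $u-\ell \leq \delta$ and hence $|k_A - \hat{k}_A| \leq \delta/2 < \delta$, proving the accuracy bound.

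For the step count, the $\lceil \log_2(M/\delta) \rceil$ queries above come directly from the halving invariant; the extra ``$+1$'' accounts for one preliminary probe used to dispose of the special case $\lambda_s^A \Delta_O^A - \lambda_a^A \leq 0$, in which receiver $A$ believes for every $r \in [0, M]$. This case is detected by a single test at (say) $r = M$, and as noted in Section \ref{sec:est_k} no further estimation is needed because the downstream linear program in (\ref{eq:simple_LP}) does not depend on the exact value of $k_A$ in this regime. The main obstacle I expect is a purely bookkeeping one: pinpointing exactly where the ``$+1$'' is spent in Algorithm $1$ (preliminary feasibility test versus rounding of the logarithm), since the pseudocode itself is not reproduced above. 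The dominant logarithmic term, however, is a transparent consequence of the halving invariant and does not depend on these implementation details.
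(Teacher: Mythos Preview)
Your reduction to a comparison oracle and the standard bisection invariant are exactly right, and this is the argument the paper has in mind (the appendix only \emph{illustrates} two iterations rather than giving a formal proof, so your write-up is if anything more complete). Two points of correction, both minor but worth fixing.

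First, the ``$+1$'' is not spent on a preliminary feasibility probe. The pseudocode \emph{is} in the paper (Algorithm~\ref{alg:est_k}), and it contains no such check. Instead the algorithm initializes $\eta^1=1$ rather than the midpoint $M/2$. After this first, non-bisecting query the bracket becomes either $[0,1]$ or $[1,M]$, of length at most $M-1<M$; only from iteration~$2$ onward does genuine halving occur, giving length at most $(M-1)/2^{\,n-1}<M/2^{\,n-1}$ after $n$ queries. That asymmetric first step is what costs the extra iteration in the bound. The degenerate case $\lambda_s^A\Delta_O^A-\lambda_a^A\le 0$ is handled implicitly---every query then returns ``believes'' and the estimate drifts toward $M$---and the paper explicitly restricts the accuracy statement to the case $k_A>0$ before stating the theorem.

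Second, a small technicality in your oracle construction: for test ratios $\eta>1$ you cannot take $n_A=1,\ n_B=\eta$, since $n_B$ is a probability. Algorithm~\ref{alg:est_k} sets $n_B=\min\{1,\eta\}$ and $n_A=\min\{1,1/\eta\}$, which keeps both in $[0,1]$ while preserving $n_B/n_A=\eta$; the oracle interpretation via (\ref{eq:both_believe}) is otherwise exactly as you describe.
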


 \begin{algorithm}
 \caption{Algorithm for estimating identity parameter $k_A$ ($k_B$) when it is non-negative.}
 \begin{algorithmic}[1]
 \renewcommand{\algorithmicrequire}{\textbf{Input:}}
 \renewcommand{\algorithmicensure}{\textbf{Output:}}
 \REQUIRE $\delta>0,M>1$
 \ENSURE  $\hat{k}_A$ \\
    \textit{Initialization:}
    $\underline{k}_A^1=0, \bar{k}_A^1=M,\eta^1=1,i=1$ \\
     \WHILE{$|\bar{k}_A^i-\underline{k}_A^i|<\delta$}
        \STATE $n_B^i=\min\{1,\eta^i\}, n_A^i=\min\{1,\frac{1}{\eta^i}\}$ 
        \IF{(\ref{eq:A_believes}) \big((\ref{eq:B_believes}) for $k_B$\big) satisfied for this $n_A^i,n_B^i$} \label{lin:A_believes}
        \STATE $\underline{k}_A=\eta^i$     $\quad (\bar{k}_B=\eta^i)$ \label{lin:lower_thresh_update}
        \ELSE
        \STATE $\bar{k}_A=\eta^i$ $\quad (\underline{k}_B=\eta^i)$ \label{lin:upper_thresh_update}
        \ENDIF
        \STATE $\eta^{i+1}=\frac{\underline{k}_A+\bar{k}_A}{2}$ \label{lin:bisect}
        \STATE $i=i+1$
     \ENDWHILE
     \RETURN $\hat{k}_A=\frac{\underline{k}_A+\bar{k}_A}{2}$
 \end{algorithmic} 
 \label{alg:est_k}
 \end{algorithm}

To obtain $k_B$ we will replace every instance of $k_A$ by $k_B$ in Algorithm \ref{alg:est_k} and change lines \ref{lin:A_believes},\ref{lin:lower_thresh_update} and \ref{lin:upper_thresh_update} as indicated. 
\begin{exmp}
Let the true identity parameters for the receiver of either type be as given in Figure \ref{fig:balance_acc_lamb} (take $\lambda_s^A=0.45$, $\lambda_a^A=0.55$). Choosing  $M=10^{4}$ and $\delta=0.01$,  Algorithm \ref{alg:est_k} converges to estimates $\hat{k}_A=2.855$ and $\hat{k}_B=1.024$ in $21$ steps. True augmented parameters are $k_A=2.857$ and $k_B=1.025$. 
\end{exmp}

\subsection{Obtaining optimal encoding strategy for the sender} \label{sec:use_k}
For the purpose of this section we will treat the estimates $\hat{k}_A$ and $\hat{k}_B$ as the true augmented identity parameters. Note that the constraint in (\ref{eq:both_believe}) acts on the ratio of $n_A$ and $n_B$. Noting that both $n_A$ and $n_B$ are probabilities we can make the following 
\begin{claim} \label{cla:opt_encode_unknown_para}
    $n^*_A=n^*_B=0$ if $k_A<k_B$. Otherwise,
    \begin{subequations}
    \begin{flalign}
        & \max\{n^*_A,n^*_B\} =1 \label{eq:one_is_one}\\
        \text{and } & \frac{n^*_B}{n^*_A} =\argmin_{\gamma\in [\hat{k}_B,\hat{k}_A]} |\gamma-1|. \label{eq:gamma_near_1}
    \end{flalign}
    \label{eq:unknown_k_opt_str}
    \end{subequations}
\end{claim}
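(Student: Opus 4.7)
The plan is to reduce Claim 3.6 to a one-dimensional optimization by parametrizing feasible encoding strategies through the ratio $\gamma = n_B/n_A$, and then exploit the piecewise monotone structure of the resulting objective. Throughout, I would work under the same standing assumption already invoked by the paper to reach (\ref{eq:both_believe}), namely $\lambda_s^A\Delta_O^A-\lambda_a^A>0$ and $\lambda_s^B\Delta_O^B-\lambda_a^B>0$; the complementary cases are explicitly disposed of at the start of Section \ref{sec:est_k} (receiver believes for any strategy) and require no separate treatment here.

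First, I would handle the degenerate case $n_A=0$: substituting into (\ref{eq:A_believes}) yields $n_B(\lambda_a^A-\lambda_s^A\Delta_O^A)\geq 0$, which under the standing assumption forces $n_B=0$. Hence any feasible point with $n_A=0$ has objective value zero and can be discarded whenever some strictly positive choice is feasible. In particular, if $k_A<k_B$ then the ratio constraint $k_B\leq n_B/n_A\leq k_A$ from (\ref{eq:both_believe}) admits no solution with $n_A>0$, so combining with the previous observation we conclude $n_A^*=n_B^*=0$, settling the first part of the claim.

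Assume now $k_A\geq k_B$. For $n_A>0$, write $n_B=\gamma n_A$; feasibility then reduces to $\gamma\in[k_B,k_A]$ together with $n_A\in(0,\min\{1,1/\gamma\}]$ (from the box constraints $n_A,n_B\leq 1$). Because the objective $n_A+n_B=(1+\gamma)n_A$ is strictly increasing in $n_A$ for every fixed $\gamma$, the inner maximum is attained at $n_A=\min\{1,1/\gamma\}$. This immediately yields $\max\{n_A^*,n_B^*\}=1$, establishing (\ref{eq:one_is_one}). Substituting back, the reduced objective becomes
\begin{equation*}
g(\gamma)=\begin{cases}1+\gamma,&\gamma\in(0,1],\\ 1+1/\gamma,&\gamma\in[1,\infty),\end{cases}
\end{equation*}
which is continuous (both branches equal $2$ at $\gamma=1$), strictly increasing on $(0,1]$, and strictly decreasing on $[1,\infty)$.

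It then remains to maximize $g$ over the closed interval $[k_B,k_A]$. Because $g$ has a unique global maximum at $\gamma=1$ and is unimodal, the constrained maximizer is simply the point of $[k_B,k_A]$ nearest to $1$: namely $1$ itself if $1\in[k_B,k_A]$, $k_A$ if $k_A<1$, and $k_B$ if $k_B>1$. This is exactly $\arg\min_{\gamma\in[k_B,k_A]}|\gamma-1|$, which gives (\ref{eq:gamma_near_1}). The argument is a clean linear-programming reduction; the only step that requires care is the bookkeeping around $n_A=0$, which I do not expect to pose any real obstacle.
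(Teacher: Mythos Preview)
Your argument is correct and follows the same line of reasoning the paper sets up: the paper derives the ratio constraint (\ref{eq:both_believe}) and then states Claim~\ref{cla:opt_encode_unknown_para} without a formal proof, so your parametrization by $\gamma=n_B/n_A$ and the unimodal reduction $g(\gamma)=(1+\gamma)\min\{1,1/\gamma\}$ is exactly the natural completion of that reasoning, and it agrees case-by-case with the explicit solution already recorded in Corollary~\ref{cor:simple_LP_soln}. There is no substantive difference in approach; you simply supply the details the paper omits.
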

Recall that the sender is trying to maximize the quality of information available to the receivers and thus, wishes to pick $n_A$ and $n_B$ as close to one as possible. Conditions (\ref{eq:one_is_one}) and (\ref{eq:gamma_near_1}) together ensure that this indeed happens. Moreover, from (\ref{eq:gamma_near_1}) it is easy to see that the maximum value of Q (equal to 4) is attained when $1\in [k_B,k_A]$. 

Note that in Algorithm \ref{alg:est_k}, we do not explicitly check if $k_A>0$ and $k_B>0$. When $k_A$ ($k_B$) is negative our algorithm will return $\hat{k}_A=M$ ($\hat{k}_B=0$) and it can be shown that the optimal encoding strategy is still given by (\ref{eq:unknown_k_opt_str}).

\section{CONCLUSIONS AND FUTURE WORKS}

We set out to understand how social identity plays a role in formation and consumption of misinformation and proposed a communication model to achieve our goal. Despite its seeming simplicity, the predictions from our model are well aligned with empirical evidence and offer some useful insights in the search for counter-mechanisms to misinformation. We also showed how this model lends itself to estimation of some behavioural parameters of the receiver.


The work presented in Section \ref{sec:exploit} can be viewed as a parameter estimation problem. In doing so, we essentially considered a noiseless measurement setting in our thought experiment of Section \ref{sec:thought_exp}. A more realistic mechanism to infer whether receivers `believe' a certain sender strategy would likely involve `bandit feedback' instead,  whereby the various probabilities defining the sender's strategy, and the receiver's  'believe/not believe' choices, must be inferred from realized time series. Additionally, uncertainty may be introduced in this inference process `in two ways - the receiver may infer a different encoding strategy as generating the sampled messages and thus take actions responding to this strategy, or the receiver may simply respond sub-optimally to the `correct' strategy because they are boundedly rational. So further work needs to be done to address such `noisy' feedback.

In the same spirit, the natural next step in any control theoretic framework is to look at possible avenues of control. There is a vast literature that studies `nudging' - a form of `soft' control - to influence human choice architecture. We are currently looking to answer how `nudges' can be incorporated naturally into our framework. 

Finally, our model looks at a single receiver (having one of two types) with the idea that the receiver can be a prototype for a large population (of either type). This viewpoint can be enriched by considering a population of receivers with identity parameters picked from a distribution. We intend to look in this direction in subsequent work as well.

\appendices

\section{Proofs For Selected Results}
\subsection{Proof for Lemma \ref{lem:qual_info}}
From (\ref{eq:complex_LP}), we see that the sender wishes to maximize $P(X=\hat X^*(Y))$ given $p^*_{\bar\theta}=q^*_{\bar\theta}=1$. Let us look more closely at the objective function,
\begin{flalign*}
    P(X=\hat X) = &\sum_{\substack{x\in\{0,1\} \\\Theta\in\{A,B\} \\y\in\{a,b\}}} \Pr( X=x,\hat X=x, \theta=\Theta, Y=y) \\
    = & \sum_{\substack{x\in\{0,1\} \\\Theta\in\{A,B\} \\y\in\{a,b\}}} \Pr(\hat X = x|Y=y,X=x,\theta=\Theta) \\ & \Pr(Y=y|X=x,\theta=\Theta)  \Pr(X=x,\theta=\Theta). 
\end{flalign*}
Note that the receiver can only observe the message $Y$ in estimating $\hat X$ and so $\hat X$ in independent of $X$ and $\theta$ given $Y$. The mapping $Y$ to $\hat X$ is captured through the receivers decoding strategy i.e. through $p_{\bar\theta}$ and $q_{\bar\theta}$. The generation of message $Y$ from the source state $X$ and type $\theta$ is determined by the encoder strategy i.e. through $m_A,n_A,m_B$ and $n_B$. Finally, noting from our assumption that source state and types are uniformly distributed we get
\begin{flalign*}
    P(X=\hat X) = & \frac{1}{4} \big(p_{\bar\theta}(m_A+m_B)+(1-p_{\bar\theta})(2-m_A-m_B) \\ & +q_{\bar\theta}(n_A+n_B)+(1-q_{\bar\theta})(2-n_A-n_B)\big). 
\end{flalign*}
Incorporating receiver best response as $p^*_{\bar\theta}=1$ and $q^*_{\bar\theta}=1$ justifies the definition of $Q$ in Lemma \ref{lem:qual_info}. A similar expansion of (\ref{eq:exp_rec_util}) allows us to express the constraints in (\ref{eq:A_believes}) and (\ref{eq:B_believes}) as 
\begin{flalign}
    \lambda_s^A(\Delta_O^A&(m_B+1-n_B)-\Delta_I^A(m_A+1-n_A)) \notag\\
    &+\lambda_a^A(n_A+n_B+m_A+m_B-2)\geq 0 \label{eq:A_believes_a}\\
    \lambda_s^A(\Delta_I^A&(n_A+1-m_A)-\Delta_O^A(n_B+1-m_B)) \notag\\
    &+\lambda_a^A(n_A+n_B+m_A+m_B-2)\geq 0 \label{eq:A_believes_b}\\
    \lambda_s^B(\Delta_O^B&(m_A+1-n_A)-\Delta_I^B(m_B+1-n_B)) \notag\\
    &+\lambda_a^B(n_A+n_B+m_A+m_B-2)\geq 0 \label{eq:B_believes_a}\\
    \lambda_s^B(\Delta_I^B&(n_B+1-m_B)-\Delta_O^B(n_A+1-m_A)) \notag\\
    &+\lambda_a^B(n_A+n_B+m_A+m_B-2)\geq 0. \label{eq:B_believes_b}
\end{flalign}
So we see that optimization (\ref{eq:complex_LP}) is indeed a linear program in $m_A,n_A,m_B$ and $n_B$. 

\addtolength{\textheight}{-0.3cm}   


\subsection{Proof for Proposition \ref{prop:simplify_LP}}
We consider the linear program form for (\ref{eq:complex_LP}) with (\ref{eq:Q_defn}) as the objective function and constraints (\ref{eq:A_believes_a})\textemdash(\ref{eq:B_believes_b}), and show that it can be simplified to (\ref{eq:simple_LP}). First, we consider (\ref{eq:A_believes_a}) and see that,
\begin{flalign*}
    &\lambda_s^A(\Delta_O^A(m_B+1-n_B)-\Delta_I^A(m_A+1-n_A)) \\
    = &\lambda_s^A(\Delta_I^A(n_A+1-m_A)-\Delta_O^A(n_B+1-m_B)) \\ &+\underbrace{2\lambda_s^A(\Delta_O^A-\Delta_I^A)}_{\geq 0},
\end{flalign*}
which shows that (\ref{eq:A_believes_b}) implies (\ref{eq:A_believes_a}) under Assumption \ref{ass:delta_rel}. Similarly we can show that (\ref{eq:B_believes_b}) implies (\ref{eq:B_believes_a}). So it suffices to solve the LP with objective function (\ref{eq:Q_defn}) and constraints (\ref{eq:A_believes_b}) and (\ref{eq:B_believes_b}). For convenience we redefine the left-hand side in (\ref{eq:A_believes_b}) and (\ref{eq:B_believes_b}) as $g_A(m_A,n_A,m_B,n_B)$ and $g_A(m_A,n_A,m_B,n_B)$ respectively. 

Also, note that $m_A=m_B=1$ and $n_A=n_B=0$ is a feasible solution to (\ref{eq:A_believes_b}) and (\ref{eq:B_believes_b}) and so, the value of our LP is lower bounded by $2$ giving us,
\begin{cor} \label{cor:LP_lower_bound}
    $m_A^*+n_A^*+m_B^*+n_B^* \geq 2$
\end{cor}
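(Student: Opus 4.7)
The plan is direct: exhibit a single point in the feasible region of the LP defining $(m_A^*, n_A^*, m_B^*, n_B^*)$ whose objective value equals $2$, and then invoke optimality. From the preceding portion of the proof, Assumption \ref{ass:delta_rel} has already reduced the feasible region to that cut out by (\ref{eq:A_believes_b}) and (\ref{eq:B_believes_b}) alone, so it suffices to test feasibility of the natural candidate $(m_A, n_A, m_B, n_B) = (1, 0, 1, 0)$, which gives $m_A + m_B + n_A + n_B = 2$.

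The verification is a one-line substitution. With $m_A = m_B = 1$ and $n_A = n_B = 0$, each of the three expressions $n_A + 1 - m_A$, $n_B + 1 - m_B$, and $n_A + n_B + m_A + m_B - 2$ vanishes. Consequently, the left-hand sides of both (\ref{eq:A_believes_b}) and (\ref{eq:B_believes_b}) evaluate to $0$ identically in the receiver parameters, so the inequalities are satisfied (with equality). The candidate is feasible, and because the LP is a maximization, the optimizer must satisfy $m_A^* + n_A^* + m_B^* + n_B^* \geq 2$, which is the corollary's claim.

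There is no meaningful obstacle here: the entire argument is a substitution check against the two retained constraints. One subtle point worth recording for clarity is that Assumption \ref{ass:delta_rel} is \emph{not} needed to verify feasibility of $(1,0,1,0)$ in (\ref{eq:A_believes_b}) and (\ref{eq:B_believes_b}) — both hold with equality regardless of the $\Delta_I^{\bar\theta}, \Delta_O^{\bar\theta}$ values — but rather to legitimize restricting attention to these two constraints in the first place, as established earlier in the proof of Proposition \ref{prop:simplify_LP}. The structural meaning of the bound is that always sending $Y = a$ is a trivially available ``babbling-like'' strategy that the sender can fall back on, and it already achieves $Q = 2$; thus any genuinely optimal strategy can only do better.
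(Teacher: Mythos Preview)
Your proof is correct and follows exactly the paper's own argument: exhibit $(m_A,n_A,m_B,n_B)=(1,0,1,0)$ as a feasible point with objective value $2$, whence the maximum is at least $2$. The paper states this in one sentence without the explicit substitution check you provide, but the idea is identical.
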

where $(m_A^*,n_A^*,m_B^*,n_B^*)$ denotes the optimal encoding strategy and will be used as such throughout our proof. In continuing our proof, we first state

\begin{lem}
\label{lem:n_pos_implies_m_1}
\begin{flalign}
n_B^*>0 \implies m_A^*=1 \label{eq:nB_pos_implies_mA_1}\\
n_A^*>0 \implies m_B^*=1 \label{eq:nA_pos_implies_mB_1}
\end{flalign}
\end{lem}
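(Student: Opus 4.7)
The two implications in (\ref{eq:nB_pos_implies_mA_1})--(\ref{eq:nA_pos_implies_mB_1}) are interchanged by the symmetry $(m_A,n_A)\leftrightarrow(m_B,n_B)$, which also swaps $g_A$ and $g_B$, so I would only prove (\ref{eq:nB_pos_implies_mA_1}). The core idea is a ``mass transfer'' perturbation: starting from a candidate optimizer with $n_B^* > 0$ and $m_A^* < 1$, set $\tilde{m}_A = m_A^* + \delta$ and $\tilde{n}_B = n_B^* - \delta$ for some $\delta \in (0, \min\{1-m_A^*,\, n_B^*\}]$, keeping $m_B^*$ and $n_A^*$ fixed. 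Since the change in the two coordinates cancels, the objective $Q$ is exactly preserved, so the perturbed point will itself be optimal as soon as it is feasible.

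Feasibility is where Assumption \ref{ass:delta_rel} does all the work. Differentiating the explicit forms of $g_A,g_B$ recalled in the proof of Proposition \ref{prop:simplify_LP}, one obtains
\begin{align*}
(\partial_{m_A}-\partial_{n_B})g_A &= \lambda_s^A(\Delta_O^A-\Delta_I^A) \geq 0, \\
(\partial_{m_A}-\partial_{n_B})g_B &= \lambda_s^B(\Delta_O^B-\Delta_I^B) \geq 0.
\end{align*}
Hence the transfer only increases (or leaves fixed) the slacks of both active inequality constraints, and the perturbed tuple $(\tilde{m}_A,n_A^*,m_B^*,\tilde{n}_B)$ is still feasible. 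Choosing $\delta = \min\{1-m_A^*,\, n_B^*\}$ therefore yields an alternative optimizer in which either $\tilde{m}_A = 1$ -- which is exactly the conclusion of (\ref{eq:nB_pos_implies_mA_1}) -- or $\tilde{n}_B = 0$, in which case the hypothesis of (\ref{eq:nB_pos_implies_mA_1}) is vacuously absent at this alternative optimum. In either case, an optimizer satisfying the claimed implication exists, and can be adopted as the canonical $(m_A^*,n_A^*,m_B^*,n_B^*)$ used throughout the proof of Corollary \ref{cor:simple_LP_soln}.

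The main subtlety I expect to manage is that Assumption \ref{ass:delta_rel} is stated with weak inequalities: when $\Delta_O^{\bar\theta} = \Delta_I^{\bar\theta}$ the transfer neither strictly improves $Q$ nor strictly relaxes a constraint, so a contradiction-to-optimality argument is unavailable, and the cleanest wording of the lemma is ``there exists an optimizer satisfying the stated implication'' rather than ``every optimizer does.'' This phrasing does not weaken the downstream use of the lemma, since Proposition \ref{prop:simplify_LP} and Corollary \ref{cor:simple_LP_soln} only need existence of such a canonical optimizer to pin down the closed-form values of $(n_A^*,n_B^*)$ under $m_A^*=m_B^*=1$.
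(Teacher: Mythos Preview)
Your argument is correct for the weaker statement you isolate (existence of an optimizer with the claimed property), and you are right that this is all Proposition~\ref{prop:simplify_LP} ever uses downstream. The route, however, is genuinely different from the paper's. The paper does \emph{not} use an equal mass transfer that merely preserves $Q$; it chooses an \emph{asymmetric} perturbation, increasing $m_A$ by $\frac{\lambda_s^A\Delta_O^A-\lambda_a^A}{\lambda_s^A\Delta_I^A-\lambda_a^A}\,\zeta$ while decreasing $n_B$ by $\zeta$. The ratio is picked so that $g_A$ is left exactly unchanged, $g_B$ strictly increases, and---because the ratio exceeds $1$ under Assumption~\ref{ass:delta_rel}---the objective $Q$ \emph{strictly} increases. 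This yields a contradiction to optimality and hence proves the stronger claim that \emph{every} optimizer satisfies the implication, not merely that one can be chosen to.

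What each approach buys: your equal-transfer argument is shorter, requires only the directional-derivative computation you wrote (which is correct), and sidesteps any concern about the paper's ratio being well-defined or positive in borderline parameter regimes; its cost is the need to reinterpret the lemma as an existence statement and to argue explicitly that this suffices. The paper's scaled-transfer argument delivers the lemma verbatim as a uniqueness-type statement, at the price of a more delicate construction. Either is acceptable here; just be aware that the paper's version is the stronger one.
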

\begin{proof}
We prove  here, (\ref{eq:nB_pos_implies_mA_1}) and (\ref{eq:nA_pos_implies_mB_1}) can be shown similarly. Let us assume instead that $m_A^*<1$ and $n_B^*>0$ and define
\begin{flalign*}
    \zeta \triangleq \min\bigg\{n_B^*, \frac{(\lambda_s^A \Delta_I^A-\lambda_a^A)}{(\lambda_s^A \Delta_O^A-\lambda_a^A)}(1-m_A^*)\bigg\} > 0 
\end{flalign*}
Now consider an alternate set of decisions variables obtained as,
\begin{flalign*}
    &(m_A',n_A',m_B',n_B') \\ = &\bigg(m_A^*+\frac{(\lambda_s^A \Delta^A_O-\lambda_a^A)}{(\lambda_s^A \Delta^A_I-\lambda_a^A)}\zeta,n_A^*,m_B^*,n_B^*-\zeta\bigg)
\end{flalign*}
The definition of $\zeta$ ensures that $m_A',n_B'\in[0,1]$. We will now evaluate the feasibility of this new set of points.
\begin{flalign*}
    & g_A(m_A',n_A',m_B',n_B') = g_A(m_A^*,n_A^*,m_B^*,n_B^*) \\  &+(\lambda_s^A \Delta^A_I-\lambda_a^A) \frac{(\lambda_s^A \Delta^A_O-\lambda_a^A)}{(\lambda_s^A \Delta^A_I-\lambda_a^A)}\zeta-(\lambda_s^A \Delta^A_O-\lambda_a^A)\zeta \\
    &= g_A(m_A^*,n_A^*,m_B^*,n_B^*) \geq 0, \\
    & g_B(m_A',n_A',m_B',n_B') = g_B(m_A^*,n_A^*,m_B^*,n_B^*) \\ & +(\lambda_s^B \Delta_O^B+\lambda_a^B) \frac{(\lambda_s^A \Delta_O^A-\lambda_a^A)}{(\lambda_s^A \Delta^A_I-\lambda_a^A)}\zeta-(\lambda_s^B \Delta_I^B+\lambda_a^B)\zeta \\
    & = g_B(m_A^*,n_A^*,m_B^*,n_B^*) + (\lambda_s^B \Delta_I^B+\lambda_a^B)  \\ 
    & \times \bigg(\underbrace{\frac{(\lambda_s^B \Delta_O^B+\lambda_a^B)}{(\lambda_s^B \Delta_I^B+\lambda_a^B)}}_{> 1\; \because \Delta^B_O>\Delta^B_I} \times \underbrace{\frac{(\lambda_s^A \Delta_O^A-\lambda_a^A)}{(\lambda_s^A \Delta_I^A-\lambda_a^A)}}_{> 1\; \because \Delta^A_O>\Delta^A_I}-1\bigg)\zeta \\
    &>g_B(m_A^*,n_A^*,m_B^*,n_B^*) \geq 0.
\end{flalign*}
Note the explicit usage of Assumption \ref{ass:delta_rel} above. Finally we compute the value of objective as, 
\begin{flalign*}
 & Q(m_A',n_A',m_B',n_B') = Q(m_A^*,n_A^*,m_B^*,n_B^*) + \\ & \underbrace{\frac{(\lambda_s^A \Delta_O-\lambda_a^A)}{(\lambda_s^A \Delta_I-\lambda_a^A)}}_{>1 \; \because \Delta^A_O>\Delta^A_I}\zeta - \zeta > Q(m_A^*,n_A^*,m_B^*,n_B^*). 
\end{flalign*}
The last inequality above again relies on Assumption \ref{ass:delta_rel} and contradicts the fact that we picked $(m^*_A,m_B^*,n_A^*,n_B^*)$ to be an optimizer. And so, $\zeta$ must be zero. A direct implication from the definition of $\zeta$ is then that if $n_A^*>0$ we must have $m_B^*=1$. 
\end{proof}

We are now in position to prove Proposition \ref{prop:simplify_LP}. First, we show that both $m_A^*$ and $m_B^*$ are equal to $1$. We will split this in two different cases.
\begin{itemize}
    \item \textbf{Case I:($(\lambda_a^A - \lambda_s^A \Delta_O^A)<0$ and $(\lambda_a^B - \lambda_s^B \Delta_O^B)<0$)} Let us assume WLOG that $m_B^*<1$. By Lemma \ref{lem:n_pos_implies_m_1} we must have $n_A^*=0$ as well. From Corollary \ref{cor:LP_lower_bound} we also know that $m_A^*+n_B^*>1$. For this to hold, once again invoking Lemma \ref{lem:n_pos_implies_m_1} we must have $m_A*=1,n_B>0$. For our point to be feasible it must satisfy the constraint,
    \begin{flalign}
         0  \leq  g_A(1,0,m_B,n_B) \notag 
        = & -(\lambda_a^A + \lambda_s^A \Delta_O^A) (1-m_B) \\
        &
        + (\lambda_a^A - \lambda_s^A \Delta_O^A) n_B. \label{eq:mA_mB_1_proof_I}
    \end{flalign}
    It is then easy to see that we must have $m_B^*=1,n_B^*=0$ when $\lambda_a^A - \lambda_s^A \Delta_O^A)<0$ to satisfy Equation (\ref{eq:mA_mB_1_proof_I}). This is in contradiction with our assumption of $m_B^*<1$, so $m_B^*$ must be $1$. 
    \item \textbf{Case II:($(\lambda_a^A - \lambda_s^A \Delta_O^A) \geq 0$ or $(\lambda_a^B - \lambda_s^B \Delta_O^B) \geq 0$)}. WLOG assume $(\lambda_a^A - \lambda_s^A \Delta_O^A)\geq 0$ then since $\Delta_O^A>\Delta_I^A$ we also have $(\lambda_a^A - \lambda_s^A \Delta_I^A)>0$. Then we observe that $g_A,g_B$ and $f$ are increasing in $m_A$ and $n_B$. Then it is easy to see that $m_A^*=1$ and $n_B*=1$. Now suppose $m_B^*<1$. Recall that this then means $n_A^*=0$ by Lemma \ref{lem:n_pos_implies_m_1}. Once again we obtain the constraints to be satisfied as,
    \begin{flalign*}
        0\leq g_A(1,0,m_B^*,1) = (\lambda_a^A + \lambda_s^A \Delta_O^A) m_B^* - 2\lambda_s^A \Delta_O^A \notag \\
        0\leq g_B(1,0,m_B^*,1) = (\lambda_a^B - \lambda_s^B \Delta_I^B) m_B^* + 2 \lambda_s^B \Delta_I^B  \notag 
    \end{flalign*}
    We see that $g_B\geq0$ holds for all values of $m_B^*$ but to ensure that $g_A\geq 0$ for all values of $m_B^*$ we must have 
    \begin{flalign*}
        m_B^* \geq \frac{2\lambda_s^A \Delta_O^A}{(\lambda_a^A + \lambda_s^A \Delta_O^A) }
    \end{flalign*}
    for all values of $\Delta_O^A,\lambda_a^A$ and $\lambda_s^A$ such that $\lambda_a^A - \lambda_s^A \Delta_O^A \geq 0$. Taking $\lambda_a^A - \lambda_s^A \Delta_O^A = \epsilon \geq 0$ we then have  
    \begin{flalign*}
        m_B^*  \geq \frac{2(\lambda_a^A -\epsilon)}{(2\lambda_a^A - \epsilon)} \quad \forall \epsilon \geq 0, \quad \forall  \lambda_a^A
        \implies m_B^*  \geq 1 
    \end{flalign*}
    We arrive at a contradiction. So we must have $m_B^* = 1$ and Proposition \ref{prop:simplify_LP} is proven. 
\end{itemize}

\section{Illustrating Algorithm \ref{alg:est_k}}
Let us suppose WLOG that the true $k_A>1$. We sequentially improve the lower bound $\underline{k}_A$ and the upper bound $\bar{k}_A$ for our estimate $\hat{k}_A$ using our `observations'. We initialize $\underline{k}_A^1=0$ and $\bar{k}_A^1=\infty$. We will call $[\underline{k}^i_A,\bar{k}^i_A]$ our region of interest and sequentially shrink this region until it has Lebesgue measure less than $\delta$.

We will aim to find $n_A$ and $n_B$ that satisfies (\ref{eq:both_believe}) and we note that only the ratio $\frac{n_B}{n_A}$ needs to satisfy certain properties to do so. Algorithm \ref{alg:est_k} in fact iterates over values of this ratio (which we indicate by $\eta$). Initializing $\eta^1=1$, we ask receiver A if they will believe the sender for the corresponding encoding strategy.  We see in Figure \ref{fig:illus_alg_1} that receiver $A$ indeed believes the sender and so our estimate $\hat{k}_A>\eta^1$. Accordingly we update the lower threshold as $\underline{k}_A^2=\eta^1$.

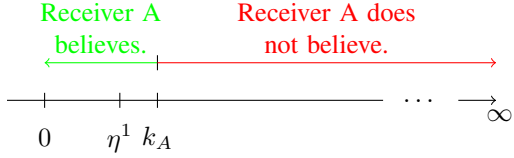
\begin{figure}[h]
    \centering
    \begin{tikzpicture}

    \draw [-] (-0.5,0) -- (4.5,0) node [midway, sloped,above] {};
    \node at (5,0) {$\displaystyle \dots$};
    \draw [->] (5.5,0) -- (6,0) node [midway, sloped, below right] {$\infty$};
     
    \draw [-] (0,0.1) -- (0,-0.1); 
    \node at (0,-0.5) {$\displaystyle 0$};
     
    \draw [-] (1,0.1) -- (1,-0.1); 
    \node at (1,-0.5) {$\displaystyle \eta^1$};
     
    \draw [-] (1.5,0.1) -- (1.5,-0.1); 
    \node at (1.5,-0.5) {$\displaystyle k_A$};
     
    \draw [->,green] (1.5,0.5) -- (0,0.5) node [text width=0.1\textwidth, text centered, midway, sloped,above] {\textcolor{green}{Receiver A believes.}};
    \draw [-] (1.5,0.4) -- (1.5,0.6); 
    \draw [->,red] (1.5,0.5) -- (6,0.5) node [text width=0.2\textwidth,text centered, midway, sloped,above] {\textcolor{red}{Receiver A does not believe.}};

  \end{tikzpicture}
  
    \caption{\textbf{(First iteration of Algorithm \ref{alg:est_k}.)} The first guess for encoding strategy lies in the region where the receiver A believes.}
    \label{fig:illus_alg_1}
\end{figure}

We now generate our next guess for $\eta_2$ by bisecting the new region of interest according to Line \ref{lin:bisect}. Note that the upper threshold is currently $\bar{k}_A^2=\infty$. In reality, we will use some large and finite value for this upper threshold so this bisection step is well-defined. Having obtained the next guess $\eta^2$, we will again evaluate whether the receiver believes the sender. From Figure \ref{fig:illus_alg_2} we see that the receiver does not and so we re-set the upper threshold as $\bar{k}_A^3=\eta^2$.    
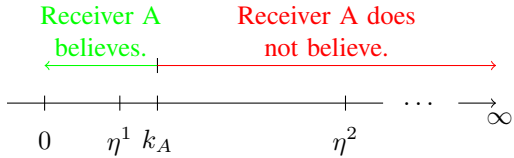
\begin{figure}[h]
    \centering
    \begin{tikzpicture}

    \draw [-] (-0.5,0) -- (4.5,0) node [midway, sloped,above] {};
    \node at (5,0) {$\displaystyle \dots$};
    \draw [->] (5.5,0) -- (6,0) node [midway, sloped, below right] {$\infty$};
     
    \draw [-] (0,0.1) -- (0,-0.1); 
    \node at (0,-0.5) {$\displaystyle 0$};
     
    \draw [-] (1,0.1) -- (1,-0.1); 
    \node at (1,-0.5) {$\displaystyle \eta^1$};
    
    \draw [-] (4,0.1) -- (4,-0.1); 
    \node at (4,-0.5) {$\displaystyle \eta^2$};
     
    \draw [-] (1.5,0.1) -- (1.5,-0.1); 
    \node at (1.5,-0.5) {$\displaystyle k_A$};
     
    \draw [->,green] (1.5,0.5) -- (0,0.5) node [text width=0.1\textwidth, text centered, midway, sloped,above] {\textcolor{green}{Receiver A believes.}};
     \draw [-] (1.5,0.4) -- (1.5,0.6); 
    \draw [->,red] (1.5,0.5) -- (6,0.5) node [text width=0.2\textwidth,text centered, midway, sloped,above] {\textcolor{red}{Receiver A does not believe.}};

  \end{tikzpicture}
  
    \caption{\textbf{(Second iteration of Algorithm \ref{alg:est_k}.)} The second guess for encoding strategy lies in the region where the receiver A does not believes.}
    \label{fig:illus_alg_2}
\end{figure}

We carry on bisecting our regions of interest to obtain subsequent guesses $\eta^i$ and the algorithm terminates when the region of interest $[\underline{k}_A^i,\bar{k}_A^i]$ is small enough.
\end{document}